\providecommand{\tabularnewline}{\\}
\begin{document}

\title{Construction of Large Constant Dimension Codes With a Prescribed
Minimum Distance.}

\author{Axel Kohnert and Sascha Kurz}

\institute{Mathematisches Institut\\
University of Bayreuth\\
95440 Bayreuth\\
Germany\\
axel.kohnert@uni-bayreuth.de\\
sascha.kurz@uni-bayreuth.de}

\maketitle
\today

\begin{abstract}
In this paper we construct constant dimension space codes with prescribed
minimum distance. There is an increased interest in space codes since
a paper \cite{koetter-2007} by Kötter and Kschischang were they gave
an application in network coding. There is also a connection to the
theory of designs over finite fields. We will modify a method of Braun,
Kerber and Laue \cite{Braun_kerber_laue_q_analoga} which they used
for the construction of designs over finite fields to do the construction
of space codes. Using this approach we found many new constant dimension
spaces codes with a larger number of codewords than previously known
codes. We will finally give a table of the best found constant dimension
space codes.
\end{abstract}
network coding, q-analogue of Steiner systems, subspace codes

\section{Introduction}

\subsection{Subspace Codes}

In \cite{koetter-2007} R. Kötter and F. R. Kschischang developed
the theory of subspace codes for applications in network coding. We
will recapitulate their definitions in a slightly different manner.
We denote by $L(GF(q)^{v})$ the lattice of all subspaces of the space
of dimension $v$ over the finite field with $q$ elements. A \emph{subspace
code} $C$ is a subset of $L(GF(q)^{v})$. If all the subspaces in
$C$ are of the same dimension then $C$ is a \emph{constant dimension
code}.

The subspace distance between two spaces $V$ and $W$ in $L(GF(q)^{v})$
is defined as\[
d_{S}(V,W):=dim(V+W)-dim(V\cap W)\]
which is equal to\[
dim(V)+dim(W)-2dim(V\cap W).\]

This defines a metric on $L(GF(q)^{v})$. The minimum (subspace) distance
of a subspace code $C$ is defined as\[
D_{S}(C):=min\{ d_{S}(V,W):V,W\in C\mbox{\,\, and }V\not=W\}.\]

We define now the optimal (subspace) code problem: 

\begin{quotation}
(P1) For a given lattice $L(GF(q)^{v})$ fix a minimum (subspace)
distance $d$ and find the maximal number $m$ of subspaces $V_{1},\ldots,V_{m}$
in $L(GF(q)^{v})$ such that the corresponding subspace code $C=\{ V_{1},\ldots,V_{m}\}$
has at least minimum distance $d.$
\end{quotation}
The following point of view is useful for the study of subspace codes:
One of the classical problems in coding theory can be stated as follows: 

\begin{quotation}
(P2) Given the Hamming graph of all words of length $v$ and a minimum
distance $d$ find a maximal number $m$ of words such that the pairwise
minimum distance is at least $d.$ 
\end{quotation}
If we substitute the Hamming graph by the Hasse diagram of $L(GF(q)^{v})$
the problem (P2) becomes problem (P1). Both problems are special cases
of a packing problem in a graph. If we start with problem (P2) and
use the 'field with one element' we get the problem (P1). Because
of this property we say (P2) is the $q-$analogue of (P1). This connection
is well known (e.g. \cite{ahlswede_2001,etzion_qdesigns_2002}) and
will be useful in the following. Since the publication of the paper
by Kötter and Kschischang the constant dimension codes as the $q-$analogue
of the constant weight codes were studied in a series of papers \cite{gadouleau-2008,etzion-silberstein-2008,xia-2007}.

\subsection{$q-$Analogues of Designs}

A $t-(v,k,\lambda)$ design is a set $C$ of $k-$element subsets
(called blocks) of the set $\{1,\ldots,v\}$ such that each $t-$element
subset of $\{1,\ldots,v\}$ appears in exactly $\lambda$ blocks.
The special case of $\lambda=1$ is called a Steiner system.

The same construction which was used to connect problem (P1) to (P2)
in the subsection above can be used to define the $q-$analogue of
a $t-$design. A $t-(v,k,\lambda)$ design over the finite field $GF(q)$
is a multiset $C$ of $k-$dimensional subspaces (called $q-$blocks)
of the $v-$dimensional vector space $GF(q)^{v}$ such that each $t-$dimensional
subspace of $GF(q)^{v}$ is subspace of exactly $\lambda$ $q-$blocks. 

The connection with the constant dimension codes is given by the following
observation in the case of a $q-$analogue of a Steiner system: Given
a $q-$analogue of a $t-(v,k,1)$ design $C$ we get a constant dimension
code of minimum distance $2(k-t+1).$ As each $t-$dimensional space
is in only one $k-$dimensional subspace the intersection between
two spaces from $C$ is at most $(t-1)-$dimensional. Therefore the
minimum distance of $C$ is at least $2(k-t+1).$ On the other hand
given any $(t-1)-$dimensional subspace $V$ we can find two $t-$dimensional
spaces $U,W$ with intersection $V$ and then two unique $q-$blocks
containing $U$ and $W.$ The minimum distance between these $q-$blocks
is $2(k-t+1).$

$q-$analogues of designs were introduced by Thomas in 1987 \cite{thomas_q_analoga_1987}.
Later they were studied in a paper by Braun et al. \cite{Braun_kerber_laue_q_analoga}
were the authors constructed the first non-trivial $q-$analogue of
a $3-$design. We will use the methods described in their paper to
construct constant dimension space codes. 

In later papers by Thomas \cite{Thomas_qdesign_restrictionenbeilambda1_1996}
and Etzion and Schwartz \cite{etzion_qdesigns_2002} it was shown
that there are severe restrictions on the possible existence of $q-$analogues
of Steiner systems. We will search for a collection of subspaces satisfying
only the conditions given by (P1) and not for the stronger condition
satisfied by a $q-$analogue of a Steiner system. In general this
method could also be used for the search for Steiner systems.

\section{Construction of Constant Dimension Codes}

In this section we describe how to construct a constant dimension
code $C$ using a system of Diophantine linear equations and inequalities.
Due to the definition of the subspace distance for all $V,W\in C$
we have $d_{S}(V,W)=2k-2dim(V\cap W)$ where $k$ is the dimension
of the code. Thus the minimum subspace distance has to be an even
number less or equal to $2k$. To construct a constant dimension subspace
code of dimension $k$ and minimum subspace distance $2d$ we have
to find $n$ subspaces $\{ V_{1},\ldots,V_{n}\}$ of dimension $k$
such that there is no subspace of dimension $k-d+1$ contained in
two of the selected $k-$spaces. We define $M$ as the incidence matrix
of the incidence system between the $(k-d+1)-$spaces (labeling the
rows of $M$) and the $k-$spaces (labeling the columns):\[
M_{W,V}:=\left\{ \begin{array}{cc}
1 & \textrm{if}\, V\textrm{ $\textrm{contains \,}$}W,\\
0 & \textrm{otherwise.}\end{array}\right.\]
Using $M$ we get the description of a constant dimension code as
the solution of a Diophantine system. We denote by $s$ the number
of columns of $M.$

\begin{theorem}
\label{thm1:}~

There is a constant dimension code with $m$ codewords and minimum
distance at least $2d$ if and only if there is a $(0/1)-$solution
$x=(x_{1},\ldots,x_{s})^{T}$ of the following system of one equation
and one set of inequalities:

(1)~~~~~~~~~~~~~~~~~~~~ ~~~~~~~~~~~~~~~~~~~~~~~~~
$x_{1}+\ldots+x_{s}=m$,

(2)\[
\begin{array}{ccccc}
 & Mx & \le & \left(\begin{array}{c}
1\\
\vdots\\
1\end{array}\right) & .\end{array}\]

\end{theorem}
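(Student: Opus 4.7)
The proof is a direct translation between the combinatorial condition ``minimum distance at least $2d$'' and the inequality $Mx\le\mathbf{1}$, via the elementary observation that two $k$-spaces $V,W$ satisfy $d_S(V,W)<2d$ precisely when $\dim(V\cap W)\ge k-d+1$, i.e.\ precisely when $V$ and $W$ share at least one common $(k-d+1)$-dimensional subspace. I will prove the two implications separately.

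For the forward direction, I start with a constant dimension code $C=\{V_1,\ldots,V_m\}$ of dimension $k$ and minimum distance at least $2d$. I enumerate all $k$-subspaces of $GF(q)^v$ as the $s$ columns of $M$ and define $x_i=1$ if the $i$-th $k$-subspace lies in $C$ and $x_i=0$ otherwise. Equation~(1) then reads $|C|=m$. To verify inequality~(2), I fix a $(k-d+1)$-space $W$ and observe that the corresponding row of $Mx$ counts the number of codewords containing $W$. If two distinct codewords $V_i,V_j$ both contained $W$, then $\dim(V_i\cap V_j)\ge k-d+1$, hence $d_S(V_i,V_j)=2k-2\dim(V_i\cap V_j)\le 2d-2<2d$, contradicting the distance assumption. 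Therefore every row sum of $Mx$ is at most $1$.

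For the converse direction, I take any $(0/1)$-solution $x$ and let $C$ consist of those $k$-subspaces $V_i$ with $x_i=1$; equation~(1) yields $|C|=m$. Suppose for contradiction that two distinct codewords $V,W\in C$ satisfy $d_S(V,W)<2d$. Then $\dim(V\cap W)\ge k-d+1$, so $V\cap W$ contains some $(k-d+1)$-dimensional subspace $U$. Since $U\subseteq V$ and $U\subseteq W$, the row of $M$ indexed by $U$ contributes at least $2$ to $Mx$, violating inequality~(2). Hence $C$ has minimum distance at least $2d$.

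The argument is essentially bookkeeping; the only genuine content is the equivalence $\dim(V\cap W)\ge k-d+1\iff V$ and $W$ share a common $(k-d+1)$-subspace, where the nontrivial direction uses that any subspace of dimension $\ge k-d+1$ has a $(k-d+1)$-dimensional subspace (immediate from linear algebra over $GF(q)$). There is no real obstacle; the main thing to be careful about is making the two distance inequalities line up: $d_S<2d$ means $d_S\le 2d-2$ (by parity), which is equivalent to $\dim(V\cap W)\ge k-d+1$ rather than $\ge k-d$.
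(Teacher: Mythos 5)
Your proof is correct and follows essentially the same route as the paper, which treats the theorem as an immediate consequence of the identity $d_S(V,W)=2k-2\dim(V\cap W)$ and the resulting reformulation that no $(k-d+1)$-space may lie in two codewords; you have merely written out the bookkeeping (characteristic vector, row sums of $M$) that the paper leaves implicit.
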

This set of equation has to be read as follows: A solution $x$ has
the property that the product of $x$ with a single row of $M$ is
$0$ or $1.$ To get the constant dimension code corresponding to
a solution we have to use the $(0/1)-$vector $x$ as the characteristic
vector of a subset of the set of all $k-$dimensional subspaces of
$GF(q)^{v}.$ Theorem \ref{thm1:} is a generalization of the Diophantine
system describing the search for a $q-$analogue of a Steiner system
which was given in \cite{Braun_kerber_laue_q_analoga}. 

\begin{corollary}
\cite{Braun_kerber_laue_q_analoga}~

There is a $q-$analogue of a $(k-d+1)-(v,k,1)$ design with $b$
blocks if and only if there is a $(0/1)-$solution $x=(x_{1},\ldots,x_{s})^{T}$
of the following system of Diophantine linear equations:

(1) ~~~~~~~~~~~~~~~~~~~~ ~~~~~~~~~~~~~~~~~~~~~~~~~$x_{1}+\ldots+x_{s}=b$,

(2)\[
\begin{array}{ccccc}
 & Mx & = & \left(\begin{array}{c}
1\\
\vdots\\
1\end{array}\right) & .\end{array}\]

\end{corollary}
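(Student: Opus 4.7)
The plan is to derive the corollary as a direct specialization of Theorem~\ref{thm1:} by replacing the ``at most one'' covering condition with the ``exactly one'' covering condition characteristic of $\lambda=1$ designs.

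First, I would set $t:=k-d+1$ and recall that the rows of $M$ are indexed by the $t$-dimensional subspaces of $GF(q)^{v}$, the columns by the $k$-dimensional subspaces, and $M_{W,V}=1$ precisely when $W\subseteq V$. Given a $(0/1)$-vector $x\in\{0,1\}^{s}$, let $C_{x}$ denote the set of $k$-subspaces whose indicator coordinate is $1$. For each $t$-subspace $W$, the value of the corresponding row of $Mx$ is exactly the number of blocks in $C_{x}$ that contain $W$.

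Next I would translate the two conditions of the corollary: condition (1), $x_{1}+\cdots+x_{s}=b$, is equivalent to $|C_{x}|=b$; condition (2), $Mx=(1,\ldots,1)^{T}$, is equivalent to saying that every $t$-dimensional subspace of $GF(q)^{v}$ is contained in exactly one block of $C_{x}$. By the definition given in Section~1.2, this is precisely what it means for $C_{x}$ to be a $t$-$(v,k,1)$ design over $GF(q)$ with $b$ blocks (noting that for $\lambda=1$ the multiset reduces to a set, since no block can appear twice without violating the covering count on any of its $t$-subspaces). Conversely, the characteristic vector of a $q$-analogue of a $(k-d+1)$-$(v,k,1)$ design with $b$ blocks yields a $(0/1)$-solution satisfying (1) and (2).

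Since the corollary is essentially a restatement of Theorem~\ref{thm1:} with the inequality $Mx\le\mathbf{1}$ replaced by the equality $Mx=\mathbf{1}$, there is no real obstacle; the only subtlety worth mentioning is justifying the transition from the multiset definition of $q$-designs to a genuine subset, which follows immediately because a repeated block would force at least one $t$-subspace to be covered twice, contradicting $\lambda=1$.
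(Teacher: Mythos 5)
Your proposal is correct and matches the paper's treatment: the paper states this corollary without a separate proof, presenting it as the immediate specialization of Theorem~\ref{thm1:} in which the inequality $Mx\le\mathbf{1}$ is tightened to the equality $Mx=\mathbf{1}$, exactly as you argue. Your extra remark on why the multiset of $q$-blocks collapses to a set when $\lambda=1$ is a correct and welcome addition, not a deviation.
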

The size of these problems is given by the number of subspaces in
$GF(q)^{v}$. In general this number is growing too fast. The number
of $k-$dimensional subspaces of $GF(q)^{v}$ is given by the $q-$binomial
coefficients:\[
\left[\begin{array}{c}
v\\
k\end{array}\right]_{q}:={\displaystyle \prod_{j=1..k}\frac{(1-q^{v+1-j})}{(1-q^{j})}}.\]

Already in the smallest case of a $q-$analogue of the Fano plane
for $q=2$ the matrix $M$ has $11811$ columns and $2667$ rows.

\section{Constant Dimension Codes with prescribed Automorphisms\label{sec:Constant-Dimension-Codes}}

To handle also larger cases we apply the following method. We no longer
look for an arbitrary constant dimension code. We are now only interested
in a set of spaces which have a prescribed group of automorphisms.
An automorphism $\varphi$ of set $C=\{ V_{1},\ldots,V_{m}\}$ is
an element from $GL(v,GF(q))$ such that $C=\{\varphi(V_{1}),\ldots,\varphi(V_{m})\}.$
We denote by $G$ the group of automorphisms of $C$, which is a subgroup
of $GL(v,GF(q))$.

The main advantage of prescribing automorphisms is that the size of
the system of equations is much smaller. The number of variables will
be the number of orbits of $G$ on the $k-$spaces. The number of
equations or inequalities will be the number of orbits on the $(k-d+1)-$spaces.
The construction process will then have two steps:

\begin{itemize}
\item In a first step the solution of a construction problem is described
as a solution of a Diophantine system of linear equations.
\item In a second step the size of the linear system is reduced by prescribing
automorphisms.
\end{itemize}
This construction method is a general approach that works for many
discrete structures as designs \cite{Kramer_Mesner_ihre_matrix,Betten_Kerber_Kohnert_laue_Wassermann_design},
$q$-analogs of designs \cite{Braun_q_analoga,Braun_kerber_laue_q_analoga},
arcs in projective geometries \cite{BKW_arcs_2005}, linear codes
\cite{BBFKKW06,micbra2,BraunKohnertWassermann:05a,maruta-projectivities-2008}
or quantum codes \cite{tonchev_quatum_2008}. 

The general method is as follows: The matrix $M$ is reduced by adding
up columns (labeled by the $k-$spaces) corresponding to the orbits
of $G.$ Now because of the relation\begin{equation}
W\,\mbox{subspace of}\, V\iff\varphi(W)\,\mbox{subspace of}\,\varphi(V)\label{eq:1}\end{equation}
for any $k-$space $V$ and $(k-d)-$space $W$ and any automorphism
$\varphi\in G$ the rows corresponding to lines in an orbit under
$G$ are equal. Therefore the redundant rows are removed from the
system of equations and we get a smaller matrix denoted by $M^{G}.$
The number of rows of $M^{G}$ is then the number of orbits of $G$
on the $(k-d+1)-$spaces. The number of columns of $M^{G}$ is the
number of orbits of $G$ on the $k-$spaces. We denote by $\omega_{1},\ldots$
the orbits on the $k-$spaces and by $\Omega_{1},\ldots$ the orbits
on the $(k-d+1)-$spaces. For an entry of $M^{G}$ we have:\[
M_{\Omega_{i},\omega_{j}}^{G}=|\{ V\in\omega_{j}:W\,\,\mbox{is a subspace of }V\}|\]
where $W$ is a representative of the orbit $\Omega_{i}$ of $(k-d+1)-$spaces.
Because of property \eqref{eq:1} the matrix $M$ is well-defined
as the definition of $M_{\Omega_{i},\omega_{j}}^{G}$ is independent
of the representative $W$. Now we can restate the above theorem in
a version with the condensed matrix $M^{G}:$

\begin{theorem}
~\label{thm:2}

Let $G$ be a subgroup of $GL(v,GF(q))$. There is a constant dimension
code of length $m$ and minimum distance at least $2d$ whose group
of automorphisms contains $G$ as a subgroup if, and only if, there
is a $(0/1)-$solution $x=(x_{1},\ldots)^{T}$ of the following system
of equations:

(1) ~~~~~~~~~~~~~~~~~~~~ ~~~~~~~~~~~~~~~~~~~~~~~~~$\left|\omega_{1}\right|x_{1}+\ldots=m$,

(2)\[
\begin{array}{ccccc}
 & M^{G}x & \le & \left(\begin{array}{c}
1\\
\vdots\\
1\end{array}\right) & .\end{array}\]

\end{theorem}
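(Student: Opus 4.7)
The plan is to derive Theorem~\ref{thm:2} from Theorem~\ref{thm1:} by restricting the search to $G$-invariant selections of $k$-spaces. A constant dimension code $C$ contains $G$ in its automorphism group precisely when $C$ is a union of $G$-orbits on $k$-spaces, so such codes are in bijection with $0/1$-vectors $x = (x_1, x_2, \ldots)$ indexed by the orbits $\omega_j$ of $k$-spaces via the rule $x_j = 1 \iff \omega_j \subseteq C$. The original variable vector $\tilde{x}$ of Theorem~\ref{thm1:} is recovered by setting $\tilde{x}_V = x_j$ whenever $V \in \omega_j$, i.e.\ $\tilde{x}$ is forced to be constant on $G$-orbits.

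The first step is to rewrite the cardinality equation. Since $C$ is the disjoint union of the selected orbits, $|C| = \sum_j |\omega_j|\, x_j$, which is exactly (1). The second step is to rewrite the incidence inequality $M\tilde{x} \le \mathbf{1}$. For a fixed $(k-d+1)$-space $W$, the $W$-row evaluates under the substitution above to
\[
\sum_j |\{V \in \omega_j : W \subseteq V\}|\, x_j .
\]
Using the equivariance property~\eqref{eq:1}, any $\varphi \in G$ induces a bijection between $\{V \in \omega_j : W \subseteq V\}$ and $\{V \in \omega_j : \varphi(W) \subseteq V\}$, so the coefficient depends only on the orbit $\Omega_i$ containing $W$ and equals $M^G_{\Omega_i, \omega_j}$. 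Hence the inequalities indexed by $W$'s in the same orbit $\Omega_i$ coincide, and the full system collapses to $M^G x \le \mathbf{1}$, which is (2).

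With these two translations the equivalence is immediate in both directions. Given a $G$-invariant code $C$ meeting the distance requirement, its orbit-indicator vector $x$ satisfies (1) and (2) by the computation above together with Theorem~\ref{thm1:}. Conversely, given a $0/1$-solution $x$ of (1)--(2), the set $C := \bigcup_{j : x_j = 1} \omega_j$ is $G$-invariant by construction, has the prescribed size by (1), and its induced vector $\tilde{x}$ satisfies the full system of Theorem~\ref{thm1:}, so $C$ has minimum distance at least $2d$. The one point requiring care --- and the essential content of the reduction --- is the well-definedness of $M^G_{\Omega_i, \omega_j}$, i.e.\ its independence of the chosen representative $W \in \Omega_i$; this is exactly what equivariance~\eqref{eq:1} supplies and has already been established in the paragraph preceding the theorem.
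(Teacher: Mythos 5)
Your proposal is correct and follows essentially the same route as the paper, whose justification for Theorem~\ref{thm:2} is precisely the preceding discussion: restrict to $G$-invariant codes (unions of orbits), sum the columns of $M$ over orbits $\omega_j$, and use the equivariance relation~\eqref{eq:1} to see that rows indexed by $(k-d+1)$-spaces in the same orbit $\Omega_i$ coincide, which both collapses the system to $M^G x \le \mathbf{1}$ and gives well-definedness of $M^G_{\Omega_i,\omega_j}$. Your write-up simply makes the reduction to Theorem~\ref{thm1:} explicit in both directions.
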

There is one further reduction possible. We are looking for a $(0/1)-$solution
where each inner product of a row of $M^{G}$ and the vector $x$
is less or equal to $1.$ We can remove columns of $M^{G}$ with entries
greater than $1.$ This gives a further reduction of the size of $M^{G}.$
After this last removal of columns we again check on equal rows and
also on rows containing only entries equal to zero. These all zero
rows and all but one copy of equal rows are also removed. 

In order to locate large constant dimension codes with given parameters
$q,k$ and $2d$ we try do find feasible solutions $x=(x_{1},\ldots)^{T}$
of the system of equations of Theorem \ref{thm:} for a suitable chosen
group $G$ and a suitable chosen length $m$. Here we remark that
we have the freedom to change equality (1) into \[
\sum_{i}|\omega_{i}|x_{i}\ge m.\]
 For this final step we use some software. Currently we use a variant
of an LLL based solver written by Alfred Wassermann \cite{wassermann_BMS_habil_2006}
or a program by Johannes Zwanzger \cite{zwanzger-heuristics-2007}
which uses some heuristics especially developed for applications in
coding theory. The advantage of the LLL based solver is that we definitely
know whether there exist feasible solutions or not whenever the program
runs long enough to terminate. Unfortunately for the examples of Section
\ref{sec:Results} this never happens so that practically we could
only use this solver as a heuristic to quickly find feasible solutions.

If we change equality (1) into a target function \[
f(x)=f(x_{1},\ldots)=|\omega_{i}|x_{i}\]
we obtain a formulation as a binary linear optimization problem. In
this case we can apply the commercial ILOG CPLEX 11.1.0 software for
integer linear programs. The big advantage of this approach is that
at every time of the solution process we have lower bounds, corresponding
to a feasible solution with the largest $f(x)$-value found so far,
and upper bounds on $f(x)$. 

We can even reformulate our optimization problem in the language of
graph theory. Here we consider the variable indices $i$ as vertices
of a graph $\mathcal{G}$ each having weight $\left|\omega_{i}\right|$.
The edges of $\mathcal{G}$ are implicitely given by inequality (2).
Therefore let us denote the $i$'s row of $M^{G}$ by $M_{i,\cdot}^{G}$.
Now the inequality $M_{i,\cdot}^{G}\le1$ translates into the condition
that the set\[
\mathcal{C}_{i}:=\left\{ j\,:\, M_{i,j}^{G}=1\right\} \]
 is an independence set in $\mathcal{G}$. To construct the graph
$\mathcal{G}$ we start with a complete graph and for each row $M_{i,\cdot}^{G}$
we delete all edges between vertices in $\mathcal{C}_{i}$. Now an
optimal solution of the binary linear program corresponds to a maximum
weight clique in $\mathcal{G}$. Again there exist heuristics and
exact algorithms to determine maximum weight cliques in graphs. An
available software package for this purpose is e.g. CLIQUER \cite{cliquer}.

This approach allows to use the whole bunch of clique bounds from
algebraic graph theory to obtain upper bounds on the target function
$f(x)$. In the case where we are able to locate large independent
sets in $\mathcal{G}$ which are not subsets of the $\mathcal{C}_{i}$
we can use them to add further inequalities to (2). If those independent
sets are large enough and not too many then a solver for integer linear
programs highly benefits from the corresponding additional inequalities.

For theoretical upper bounds and practical reasons how to quickly
or exhaustively locate solutions of our system of Theorem \ref{thm:2}
it is very useful to have different formulations of our problem to
be able to apply different solvers.

\subsection{Example\label{sub:Example}}

We start with the space $GF(2)^{7}.$ We now describe the construction
of a subspace code with $304$ codewords and constant dimension equal
to $3.$ This code will have minimum subspace distance $4.$ The matrix
$M$ is the incidence matrix between the $3-$dimensional subspaces
of $GF(2)^{7}$ and the $2-$dimensional subspaces. Without further
reductions this matrix has $\left[\begin{array}{c}
7\\
3\end{array}\right]_{2}=11811$ columns and $\left[\begin{array}{c}
7\\
2\end{array}\right]_{2}=2667$ rows. We prescribe now a group $G$ of automorphisms generated by
a single element:\[
G:=\left\langle \left(\begin{array}{ccccccc}
1\\
 &  &  &  & 1\\
 &  &  &  &  & 1\\
 &  &  &  &  &  & 1\\
 & 1 & 1 &  & 1 & 1\\
 &  & 1 & 1 &  & 1 & 1\\
 & 1 & 1 & 1 &  & 1 & 1\end{array}\right)\right\rangle .\]

This group $G$ has $567$ orbits on the $3-$spaces and $129$ orbits
on the $2-$spaces. Using Theorem \ref{thm:} we can formulate the
search for a large constant dimension code as a binary linear maximization
problem having $129$ constraints and $567$ binary variables. After
a presolving step, automatically performed by the ILOG CPLEX software,
there remain only $477$ binary variables and $126$ constraints with
$3306$ nonzero coefficients.

After some minutes the software founds a $(0/1)-$solution with $16$
variables equal to one. Taking the union of the corresponding $16$
orbits on the $3-$spaces of $GF(2)^{7}$ we got a constant dimension
space code with $304$ codewords having minimum distance $4.$ Previously
known was a code with $289$ codewords obtained from a construction
using rank-metric codes (\cite{silberstein-2008} p.28) and another
code consisting of $294$ subspaces discovered by A. Vardy (private
email communication).

In general it is difficult to construct the condensed matrix $M^{G}$
for an arbitrary group and larger parameters $v$ and $k$ as the
number of subspaces given by the Gaussian polynomial $\left[\begin{array}{c}
v\\
k\end{array}\right]_{q}$ grows very fast and it becomes difficult to compute all the orbits
necessary for the computation of $M^{G}.$ In the following section
we give a method to get a similar matrix in special cases.

\section{Using Singer Cycles\label{sec:Using-Singer-Cycles}}

A special case of the above method is the use of a Singer cycle. We
use for the reduction a Singer subgroup of $GL(v,q)$ which acts transitively
on the one-dimensional subspaces of $GF(q)^{v}.$ Singer cycles have
been used in many cases for the construction of interesting geometric
objects \cite{0998.20003}. We will now describe a method to construct
a set $C$ of $k-$subspaces of $GF(q)^{v}$ with the following two
special properties:

\begin{enumerate}
\item $C$ has the Singer subgroup as a subgroup of its group of automorphisms.
\item The dimension of the intersection of two spaces from $C$ is at most
one.
\end{enumerate}
Such a set $C$ of course is a constant dimension subspace code of
minimum distance $2(k-1).$ This is a special of the situation of
Theorem \ref{thm:}. We now fix one generator $g\in GL(v,q)$ of a
Singer subgroup $G$ and a one-dimensional subspace $V$ of $GF(q)^{v}.$
As $G$ acts transitively on the one-dimensional subspaces we can
label any one-dimensional subspace $W$ by the unique exponent $i$
between $0$ and $l:=\left[\begin{array}{c}
v\\
k\end{array}\right]_{2}-1$ with the property that $W=g^{i}V.$ Given a $k-$spaces $U$ we can
describe it by the set of one-dimensional (i.e. numbers between $0$
and $l$) subspaces contained in $U.$ Given such a description of
a $k-$spaces it is now easy to get all the spaces building the orbit
under the Singer subgroup $G$. You simply have to add one to each
number and you will get the complete orbit if you do this $l$ times.

\begin{example}
$q=2,v=5,k=2$ 

A two-dimensional binary subspace contains three one-dimensional subspaces.
We get a two-dimensional space by taking the two one-dimensional spaces
labeled $\{0,1\}$ and the third one given by the linear combination
of these two will have a certain number in this example $\{14\}.$
Therefore we have a two dimensional space described by the three numbers
$\{0,1,14\}.$ Two get the complete orbit under the Singer subgroup
we simply has to increase the numbers by one for each multiplication
by the generator $g$ of the Singer subgroup. The orbit length of
the Singer subgroup is $31$ and the orbit is build by the $31$ sets:
$\{0,1,14\},\{1,2,15\},\ldots,\{16,17,30\},\{0,17,18\},\ldots$ $\{12,29,30\},$
$\{0,13,30\}.$
\end{example}
To describe the different orbits of the Singer subgroup we build the
following set of pairwise distances:

Denote by $s$ the number of one-dimensional subspaces in $k-$space.
Let $\{ v_{1},\ldots,v_{s}\}\subset\{0,1,\ldots,l\}$ be the set of
$s$ numbers describing a $k-$space $U$. Denote by $d_{\{ i,j\}}$
the distance between the two numbers $v_{i}$ and $v_{j}$ modulo
the length of the Singer cycle. $d_{\{ i,j\}}$ is a number between
$1$ and $l/2.$ We define the multiset $D_{U}:=\{ d_{\{ i,j\}}:1\le i<j\le s\}$.
We call $D_{U}$ the distance distribution of the subspace $U.$ All
the spaces in an orbit of a Singer subgroup have the same distance
distribution and on the other hand different orbits have different
distance distribution. We therefore also say that $D_{U}$ is the
distance distribution of the orbit.

We use these distance distribution to label the different orbits of
the Singer subgroup of the $k-$spaces. The first observation is:

\begin{lemma}
A Singer orbit as a subspace code

An orbit $C=$ $\{ V_{0},\ldots,V_{l}\}$ of Singer subgroup on the
$k-$subspaces of $GF(q)^{v}$ is a subspace code of minimum distance
$2(k-1)$ if and only if the distance distribution of the orbit has
no repeated numbers.
\end{lemma}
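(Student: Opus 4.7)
The plan is to translate the minimum-distance condition into a combinatorial condition on the label set of a representative subspace, and then rewrite that condition in terms of the distance distribution.

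First I would unpack the distance. Since every element of $C$ has dimension $k$, for distinct $V_a, V_b \in C$ we have $d_S(V_a, V_b) = 2k - 2\dim(V_a \cap V_b)$, so $C$ has minimum distance at least $2(k-1)$ if and only if $\dim(V_a \cap V_b) \le 1$ for all $a \ne b$. In our setup, a $k$-subspace is described by the set of its $s$ one-dimensional subspaces, and an intersection of dimension at most $1$ means the two subspaces share at most one one-dimensional subspace, i.e.\ at most one common label from $\{0,1,\dots,l\}$.

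Next I would use the Singer action to reduce the check to the translates of a single representative. Writing $V_0$ by its label set $\{v_1,\dots,v_s\}$, the subspace $g^t V_0$ has label set $\{v_1+t,\dots,v_s+t\}$ (all arithmetic mod $l+1$). Because $V_a \cap V_b = g^a\bigl(V_0 \cap g^{b-a} V_0\bigr)$, the intersection dimension only depends on $t := b-a \pmod{l+1}$, so it suffices to show that $V_0$ and $g^t V_0$ share at most one label for every $t \in \{1,\dots,l\}$. A label is shared precisely when there exist $i,j$ with $v_i \equiv v_j + t \pmod{l+1}$, so the number of common labels equals the number of ordered pairs $(i,j)$ with $v_i - v_j \equiv t$.

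Finally I would match this count with the distance distribution $D_{V_0}$. Each unordered pair $\{v_i, v_j\}$ gives rise to two ordered differences $d$ and $l+1-d$, and it contributes exactly one entry $d_{\{i,j\}} = \min(d, l+1-d)$ to $D_{V_0}$. Hence, for any $t$, the multiplicity of $t$ among ordered differences equals the number of unordered pairs with cyclic distance $\min(t,l+1-t)$, i.e.\ the multiplicity of that value in $D_{V_0}$. Therefore ``$V_0$ and $g^tV_0$ share at most one label for every $t\ne 0$'' is equivalent to ``every value occurs at most once in $D_{V_0}$'', which proves both directions of the lemma.

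The main technical point, and the place I would be most careful, is the bookkeeping that identifies the multiset of ordered differences of $V_0$ with (a doubling of) the distance distribution $D_{V_0}$; in particular one must check the self-symmetric case $t \equiv -t \pmod{l+1}$, which can only occur when $l+1$ is even, to confirm that the equivalence still reads correctly as a condition of no repetition in $D_{V_0}$. Everything else is just the observation that Singer-invariance collapses pairwise tests to tests against a single representative, together with the rewriting of $\dim(V\cap W)\le 1$ as a bound on shared labels.
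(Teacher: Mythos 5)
Your proof is correct and follows the same route as the paper's own argument, which is a two-sentence version of exactly this idea: a repeated entry in the distance distribution is the same thing as a pair of labels that reappears, shifted, inside some translate $g^{t}V_{0}$, i.e.\ two common one-dimensional subspaces and hence an intersection of dimension at least $2$. You have simply made explicit the reduction to a single representative and the counting of common labels, which is a genuine improvement in rigor over the paper's sketch. One caveat concerning the point you flag at the end: if you actually carry out the check for $2t\equiv 0\pmod{l+1}$, the identity asserted in your third paragraph fails there --- an unordered pair at cyclic distance $(l+1)/2$ contributes \emph{two} ordered differences equal to $t=(l+1)/2$, so already a \emph{single} occurrence of $(l+1)/2$ in $D_{V_{0}}$ forces $V_{0}$ and $g^{(l+1)/2}V_{0}$ to share the two labels $v_{i}$ and $v_{j}$, giving an intersection of dimension at least $2$. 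Hence when $l+1$ is even the correct criterion is ``no repeated entries and $(l+1)/2\notin D_{V_{0}}$.'' Since $l+1=1+q+\cdots+q^{v-1}$ is odd whenever $q$ is even (in particular for $q=2$, the only case the paper uses), this does not affect any application in the paper, but your write-up states an equality that is false in precisely the exceptional case you promise to check, rather than resolving it.
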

\begin{proof}
We have to show that the intersection of any pair of spaces in $C$
has at most one one-dimensional space in common. Having no repeated
entry in the distance distribution means that a pair of numbers (i.e.
pair of one-dimensional subspaces) in a $q-$block $b$ of $C$ can
not be built again by shifting the numbers in $b$ using the operation
of the Singer subgroup on $b.$ 
\end{proof}
The same is true if we want to construct a subspace code by combining
several orbits of the Singer subgroup. We have to check that the intersection
between two spaces is at most one-dimensional for this we define the
matrix $S$, whose columns are labeled by the orbits $\Omega_{j}$
of the Singer subgroup on the $k-$dimensional subspaces of $GF(q)^{v}$
and the rows are labeled by the possible number $i\in\{0,\ldots,l/2\}$
in the distance distribution of the $k-$spaces. Denote by $D_{\Omega_{j}}$
the distance distribution of the $j-$th orbit then we define an entry
of the matrix $S$ by\[
S_{i,\Omega_{j}}:=\left\{ \begin{array}{cc}
1 & \mbox{if }i\in D_{\Omega_{j}}\\
0 & \textrm{otherwise.}\end{array}\right..\]
Using this matrix $S$ we have the following characterization of constant
dimension codes with prescribed automorphisms:

\begin{theorem}
~\label{thm:}

There is a constant dimension code $C$ with $n\cdot(l+1)$ codewords
and minimum distance at least $2(k-1)$ whose group of automorphisms
contains the Singer subgroup as a subgroup if, and only if, there
is a $(0/1)-$solution $x=(x_{1},\ldots)^{T}$ of the following system
of equations:

(1) $\Sigma x_{i}=n$,

(2)\[
\begin{array}{ccccc}
 & Sx & \le & \left(\begin{array}{c}
1\\
\vdots\\
1\end{array}\right) & .\end{array}\]

\end{theorem}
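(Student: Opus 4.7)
The plan is to view the statement as the Singer-subgroup specialization of Theorem \ref{thm:2} with $d = k-1$, so that the only real content is the identification of the distance-distribution matrix $S$ with the relevant portion of the incidence matrix $M^G$ from Section \ref{sec:Constant-Dimension-Codes}.

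First I would set up a dictionary between $2$-dimensional subspaces and distances. Fix a generator $g$ of the Singer subgroup $G$ and the reference $1$-dim subspace $V_0$, so that every $1$-dim subspace is uniquely $g^i V_0$ for some $i \in \mathbb{Z}/(l+1)$, and encode a $k$-space $U$ by the set $I_U \subset \mathbb{Z}/(l+1)$ of exponents of its $1$-dim subspaces. Any unordered pair $\{a,b\} \subset I_U$ spans a $2$-dim subspace of $U$, and its $G$-orbit is determined solely by the reduced distance $\min(|a-b|,\, l+1-|a-b|)$. Consequently, for an orbit $\omega_j$ with representative $U$, the entry of $M^G$ at the row indexing the $2$-dim orbit of distance $i$ and the column $\omega_j$ equals the multiplicity of $i$ in the multiset $D_{\omega_j}$. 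Matching $M^G$ with $S$ therefore amounts to keeping only those columns whose distance distribution is repetition-free (these are precisely the columns of $M^G$ whose entries all lie in $\{0,1\}$); the remaining columns can be discarded as in the reduction step described right after Theorem \ref{thm:2}.

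For the ``if'' direction, given a $(0/1)$-solution $x$, I form $C := \bigcup_{x_j = 1} \omega_j$. Being a union of $G$-orbits, $C$ has $G \le \mathrm{Aut}(C)$. Each selected orbit $\omega_j$ comes from the repetition-free subfamily, so by the preceding lemma it is itself a subspace code of minimum distance $2(k-1)$ and has length $l+1$; summing over the $n$ selected orbits gives $|C| = n(l+1)$, which matches (1). If two codewords $U_1 \in \omega_i$ and $U_2 \in \omega_j$ from distinct selected orbits shared a $2$-dim subspace, the dictionary would produce a distance $d \in D_{\omega_i} \cap D_{\omega_j}$, forcing $(Sx)_d \ge 2$ and contradicting (2). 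For the ``only if'' direction, I take a valid code $C$ with $G \le \mathrm{Aut}(C)$: it decomposes as a disjoint union of $G$-orbits; each contributing orbit is repetition-free by the lemma (else its internal minimum distance would already fail) and hence of length $l+1$; the size condition then fixes the count at exactly $n$ orbits; and pairwise disjointness of their distance distributions (by the dictionary, contrapositively) yields $Sx \le 1$. Setting $x_j = 1$ on the contributing orbits produces the desired solution.

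The main obstacle I expect is the regularity step that passes from ``$D_{\omega_j}$ has no repeated distance'' to ``$\omega_j$ has full length $l+1$''. This requires a careful analysis of the Singer stabilizer of a representative $U$: any nontrivial stabilizer element $g^t$ with $0 < t < l+1$ forces $I_U$ to be invariant under translation by $t$ in $\mathbb{Z}/(l+1)$, and a short counting argument on the pairs $\{a,\, a+t\} \subset I_U$ then exhibits a repeated reduced distance in $D_U$, contradicting repetition-freeness. Once this correspondence between repetition-freeness and full orbit length is secured, the remainder of the argument is a faithful transcription of Theorem \ref{thm:2} into the language of distance distributions.
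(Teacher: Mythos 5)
Your proof is correct and follows the route the paper intends: the paper states this theorem without a separate proof, treating it as an immediate consequence of the preceding single-orbit lemma together with the observation that two $k$-spaces from different Singer orbits share a $2$-dimensional subspace exactly when their distance distributions intersect, which is what the matrix $S$ encodes. Your write-up additionally supplies the stabilizer argument showing that a repetition-free orbit necessarily has full length $l+1$ (needed for the codeword count $n\cdot(l+1)$), a point the paper leaves implicit.
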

This is the final system of one Diophantine linear equation together
with $l/2+1$ inequalities which we successfully solved in several
cases.

\section{Results\label{sec:Results} }

As mentioned in the introduction there is an increased interest on
constant dimension codes with a large number of codewords for a given
minimum subspace distance. There are (very) recent ArXiV-preprints
\cite{etzion-silberstein-2008,etzion_vardy_isit_2008,silberstein-2008}
giving some constructions for those codes. 

Here we restrict ourselves on the binary field $q=2$ and dimension
$k=3$ and minimum subspace distance $d_{S}=4.$

Using the approach described in Section \ref{sec:Using-Singer-Cycles}
it was possible to construct constant dimension codes using the Singer
cycle with the following parameters. We denote by $n$ the number
of orbits used to build a solution, by $d$ we denote minimum space
distance of the corresponding constant dimension space code:

\begin{center}\begin{tabular}{|c|c|c|c|c|c|c|}
\hline 
$v$&
$k$&
$n$&
$\#$ orbits&
$\#$ codewords&
best&
$d$\tabularnewline
\hline
\hline 
$6$&
$3$&
$1$&
$19$&
$1\cdot63=63$&
$71$\cite{silberstein-2008}&
$4$\tabularnewline
\hline 
$7$&
$3$&
$2$&
$93$&
$2\cdot127=254$&
$294$&
$4$\tabularnewline
\hline 
$8$&
$3$&
$5$&
$381$&
$5\cdot255=1275^{*}$&
$1164$\cite{silberstein-2008}&
$4$\tabularnewline
\hline 
$9$&
$3$&
$11$&
$1542$&
$11\cdot511=5621^{*}$&
$4657$\cite{silberstein-2008}&
$4$\tabularnewline
\hline 
$10$&
$3$&
$21$&
$6205$&
$21\cdot1023=21483^{*}$&
$18631$\cite{silberstein-2008}&
$4$\tabularnewline
\hline
$11$&
$3$&
$39$&
$24893$&
$39\cdot2047=79833^{*}$&
$74531$\cite{silberstein-2008}&
$4$\tabularnewline
\hline
$12$&
$3$&
$77$&
$99718$&
$77\cdot4095=315315^{*}$&
$298139$\cite{silberstein-2008}&
$4$\tabularnewline
\hline
$13$&
$3$&
$141$&
$399165$&
$141\cdot8191=1154931$&
$1192587$\cite{silberstein-2008}&
$4$\tabularnewline
\hline
$14$&
$3$&
$255$&
$1597245$&
$255\cdot16383=4177665$&
$4770411$\cite{silberstein-2008}&
$4$\tabularnewline
\hline
\end{tabular}\par\end{center}

In \cite{etzion_vardy_isit_2008} the authors defined the number $A_{q}(v,d,k)$
as the maximal number of codewords in a constant dimension subspace
code of minimum distance $d.$ They derived lower and upper bounds.
We have implemented the construction method described in \cite{silberstein-2008}
to obtain the resulting code sizes which give the lower bounds for
$A_{q}(v,d,k)$ for $v\ge9$. In the above table we marked codes which
improved the lower bounds on $A_{q}(v,d,k)$ with an $*$. We would
like to remark that for $6\le v\le8$ our results are optimal for
the Singer cycle as a subgroup of the group of automorphisms. So far
for $v=9$ a code size of $n=12$ is theoretically possible. (In this
case the corresponding binary linear program was not solved to optimality.)

Since for $v=6,7$ the method using the Singer cycle was not capable
of beating the best known constant dimension code we tried the more
general approach described in Section \ref{sec:Constant-Dimension-Codes}.

For $v=6$ even the original incidence matrix $M$ or $M^{G}$where
$G$ is the identity group results in only 1395 binary variables and
651 constraints having 9765 nonzero entries. Using the ILOG CPLEX
11.1.0 solver directly on this problem yields a constant dimension
code of cardinality $n=74$ which beats the example of \cite{etzion-silberstein-2008,silberstein-2008}
by $3$. The upper bound in this case is given by $93$.

As mentioned in Example \ref{sub:Example} the original incidence
matrix $M$ is quite large. Here the direct approach has not lead
to any improvements. Although in general it is difficult to construct
the condensed matrix $M^{G}$ for an arbitrary group and larger parameters
we were able to conquer the difficulties for $v=7,k=3,$ $d_{S}=4$
and some groups. The group resulting in the code having $304$ three-dimensional
subspaces of $GF(2)^{7}$ such that the intersection of two codewords
has dimension at most one was already given in Example \ref{sub:Example}.
We have tried several groups before ending up with this specific group.
More details can be shown using the following diagram:

\noindent \begin{center}\includegraphics[scale=0.33]{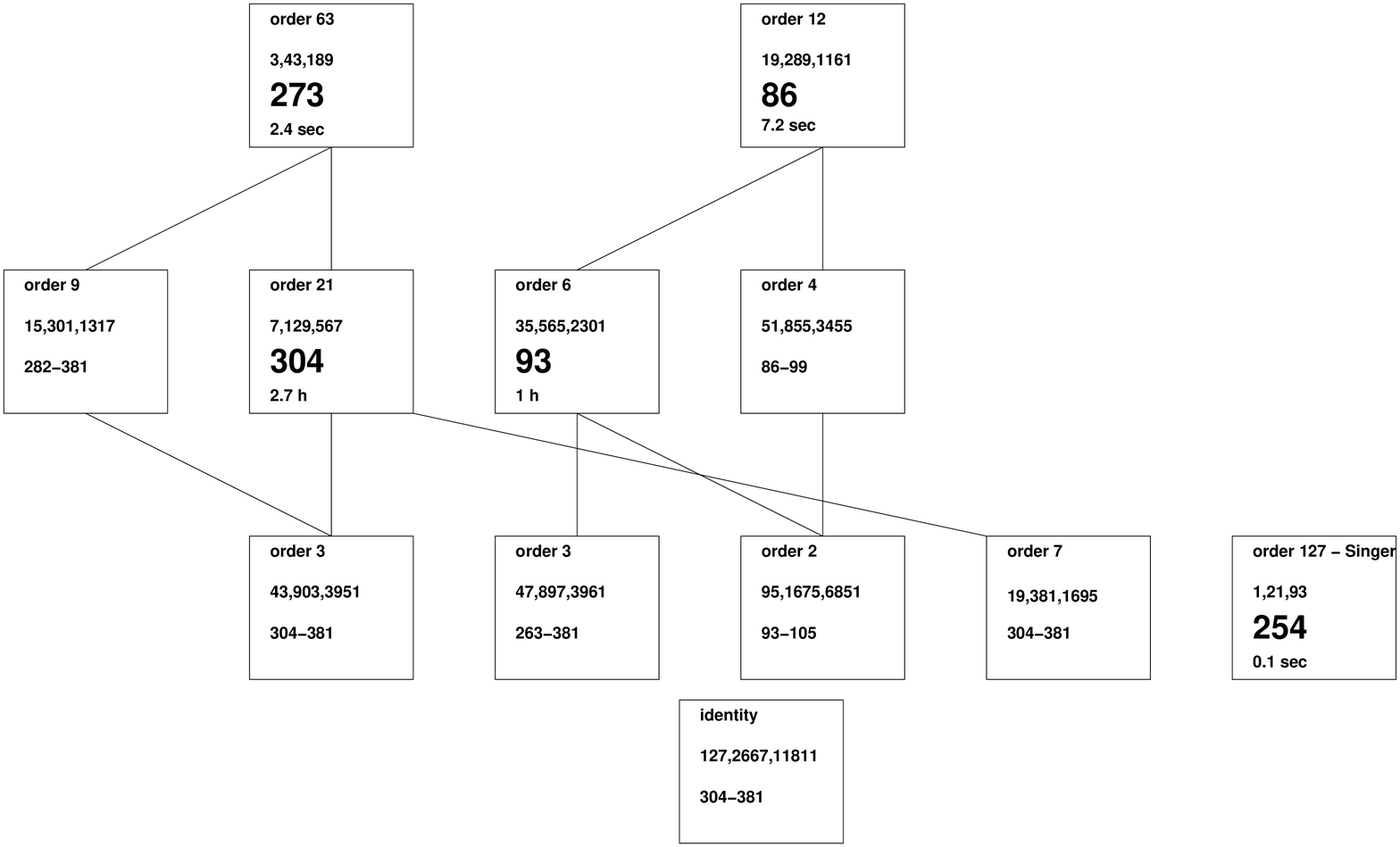}\par\end{center}

This picture shows part of the subgroup lattice of the automorphism
group $PGL(7,2)$ of the $L(GF(2)^{7}).$ It only shows cyclic groups
and in the top row we give the order of the group. In the second row
we give the number of orbits on the points, lines and planes. In the
third row of each entry we give the size $lb$ of a subspace code
and the best found upper bound $ub$ in the format $lb-ub$. As described
in Section \ref{sec:Constant-Dimension-Codes} for a given group our
problem corresponds to several versions of feasibility or optimization
problems. To obtain the lower bounds we have used the LLL based algorithm,
the coding theoretic motivated heuristic and the ILOG CPLEX solver
for integer linear programs. The upper bounds were obtained by the
CPLEX solver stopping the solution process after a reasonable time.
Whenever the lower and the upper bound meet we have written only one
number in bold face. In each of these cases we give the necessary
computation time to prove optimality in the forth row.

As we can split orbits if we move to a subgroup we can translate a
solution found for a group $G$ into a solution for a subgroup of
$G.$ E.g. for the groups of order smaller than 21 we did not find
codes of size 304 directly. This fact enables us to perform a restricted
search in systems corresponding to subgroups by only considering solutions
which are in some sense \emph{near} to such a translated solution.
We have tried this for the subgroups of the group of order $21$ -
unfortunately without success.

We would like to remark that solving the linear relaxation can prevent
other heuristics from searching for good solutions where no good solutions
can exist. E.g. we can calculate in a second that every code in the
case of the third group in the third row can contain at most $106$
codewords. Since we know better examples we can skip calculations
in this group and all groups which do contain this group as a subgroup.

Finally we draw the conclusion that following the approach described
in Section \ref{sec:Constant-Dimension-Codes} it is indeed possible
to construct good constant dimension codes for given minimum subspace
distance. Prescribing the Singer cycle as a subgroup of the automorphism
group has some computational advantages. The resulting codes are quite
competitive for $v\ge8$. The discovered constant dimension codes
for $v=6,7$ show that it pays off to put some effort in the calculation
of the condensed matrix $M^{G}$ for other groups.

\bibliographystyle{plain}
\bibliography{codes,incidence}

\end{document}